\newtheorem{theorem}{Theorem}
\newcommand{\limp}{\rightarrow}
\newcommand{\ltrue}{\mathit{true}}
\newcommand{\ctl}{\textsc{CTL}\xspace}
\newcommand{\ctlfo}{\textsc{CTL+FO}\xspace}
\newcommand{\pathE}{E} 
\newcommand{\pathA}{A} 
\newcommand{\ltlNext}{X} 
\newcommand{\ltlG}{G} 
\newcommand{\ltlF}{F} 
\newcommand{\ltlU}{U}  
\newcommand{\vars}{v}
\newcommand{\init}{\mathit{init}}
\newcommand{\nextRel}{\mathit{next}}
\newcommand{\inv}{\mathit{inv}}
\newcommand{\aux}{\mathit{aux}}
\newcommand{\wfPred}{\mathit{wf}}
\newcommand{\rank}{\mathit{rank}}
\newcommand{\algConsGen}{\textsc{Gen}\xspace}
\newcommand{\ezcase}[1]{$\;\mid$ #1 $\Rightarrow$}
\newcommand{\ehsfTool}{\textsc{Ehsf}\xspace}
\newcommand{\skolemRelSymbol}{\mathit{sk}}
\newcommand{\funTemplateOf}[2]{\textsc{Templ}(#1)(#2)}
\newcommand{\computations}[2]{\ensuremath{\Pi_{#1}(#2)}}
\newcommand{\program}{\ensuremath{P}}
\newcommand{\sat}[3]{\ensuremath{#1, #2 \models #3}}
\newcommand{\subst}[2]{\ensuremath{[#2/#1]}}
\newcommand{\modelsT}{\models_{\mathcal{T}}}
\newcommand{\ourTool}{{\sc CTLFO}\xspace} 
\newcommand{\ctlfosat}{\models_{\mathit{CTL+FO}}}
\newcommand{\cc}[1]{\multicolumn{2}{|c|}{#1}}
\newcommand{\ipYes}{\checkmark\xspace}
\newcommand{\ipNo}{$\times$}
\newcommand{\ipTO}{T/O\xspace}
\newif\ifisReport
\begin{document}

\title{CTL+FO Verification as Constraint Solving}

\conferenceinfo{SPIN}{'14, July 21–23, 2014, San Jose, CA, USA}
\CopyrightYear{2014}
\crdata{978-1-4503-2452-6/14/07}

\numberofauthors{3}
\author{
\alignauthor
Tewodros A. Beyene\\
  \affaddr{Technische Universit\"at M\"unchen, Germany}
\alignauthor
Marc Brockschmidt\\
  \affaddr{Microsoft Research Cambridge, UK}
\alignauthor
Andrey Rybalchenko\\
  \affaddr{Microsoft Research Cambridge, UK}
}

\maketitle
\begin{abstract}
Expressing program correctness often requires relating program data
throughout (different branches of) an execution.
Such properties can be represented using \ctlfo, a logic that
allows mixing temporal and first-order quantification.
Verifying that a program satisfies a \ctlfo property is a challenging
problem that requires both temporal and data reasoning.
Temporal quantifiers require discovery of invariants and ranking
functions, while first-order quantifiers demand instantiation
techniques.
In this paper, we present a constraint-based method for proving \ctlfo
properties automatically.
Our method makes the interplay between the temporal and first-order
quantification explicit in a constraint encoding that combines
recursion and existential quantification. 
By integrating this constraint encoding with an off-the-shelf solver
we obtain an automatic verifier for~\ctlfo.


\end{abstract}

\category{D.2.4}{Software Engineering}{Software/Program Verification}
\category{F.3.1}{Logics and Meanings of Programs}{Specifying and Verifying and Reasoning about Programs}
\terms Verification, Theory
\keywords Model Checking, Software Verification, Temporal Logic

\section{Introduction}

In specifying the correct behaviour of systems, relating data at various
stages of a computation is often crucial.
Examples include program termination~\cite{TerminatorPLDI06} (where the value
of a rank function should be decreasing over time), 
correctness of reactive systems~\cite{Hodkinson02} (where each incoming
request should be handled in a certain timeframe),
and information flow~\cite{Heusser10} (where for all possible secret input
values, the output should be the same).
The logic \ctlfo offers a natural specification mechanism for such properties,
allowing to freely mix temporal and first-order quantification.
First-order quantification makes it possible to specify variables dependent on
the current system state, and temporal quantifiers allow to relate this data
to system states reached at a later point.

While \ctlfo and similar logics have been identified as a specification
language before, no fully automatic method to check \ctlfo properties on
infinite-state systems was developed.
Hence, the current state of the art is to either produce verification tools
specific to small subclasses of properties, or using error-prone program
modifications that explicitly introduce and initialize ghost variables, which
are then used in (standard) \ctl specifications.

In this paper, we present a fully automatic procedure to transform a \ctlfo
verification problem into a system of existentially quantified recursive Horn
clauses.
Such systems can be solved by leveraging recent advances in constraint 
solving~\cite{ehsf}, allowing to blend first-order and temporal
reasoning.
Our method benefits from the simplicity of the proposed proof rule and
the ability to leverage on-going advances in Horn constraint solving.

\paragraph{Related Work}
Verification of \ctlfo and its decidability and complexity have been studied
(under various names) in the past.
Bohn et al.~\cite{Bohn98} presented the first model-checking algorithm.
Predicates partitioning a possibly infinite state space are deduced
syntactically from the checked property, and represented symbolically by
propositional variables.
This allows to leverage the efficiency of standard BDD-based model checking
techniques, but the algorithm fails when the needed partition of the state
space is not syntactically derivable from the property.

Working on finite-state systems, 
Hall\'e et al.~\cite{Halle07}, Patthak et al.~\cite{Patthak02} and
Rensink~\cite{Rensink06} discuss a number of different techniques for
quantified CTL formulas.
In these works, the finiteness of the data domain is exploited to instantiate
quantified variables, thus reducing the model checking problem for quantified
CTL to standard CTL model checking.

Hodkinson et al.~\cite{Hodkinson02} study the decidability of \ctlfo and some
fragments on infinite state systems.
They show the general undecidability of the problem, but also identify certain
decidable fragments. Most notably, they show that by restricting first order
quantifiers to state formulas and only applying temporal quantifiers to
formulas with at most one free variable, a decidable fragment can be obtained.
Finally, Da Costa et al.~\cite{DaCosta12} study the complexity of checking
properties over propositional Kripke structures, also providing an overview of
related decidability and complexity results.
In temporal epistemic logic, Belardinelli et al.~\cite{Belardinelli12}
show that checking \textsc{FO-CTLK} on a certain subclass of infinite
systems can be reduced to finite systems.
In contrast, our method directly deals with quantification over
infinite domains.


\section{Preliminaries}

\vspace{-4mm}
\paragraph{Programs}

We model programs as transition systems.
A program $\program$ consists of a tuple of program variables~$\vars$,
an initial condition $\init(\vars)$, and a transition
relation~$\nextRel(\vars, \vars')$. 
A state is a valuation of $\vars$. 
A computation $\pi$ is a maximal sequence of states $s_1, s_2, \ldots$
such that $\init(s_1)$ and for each pair of consecutive states $(s,
s')$ we have $\nextRel(s, s')$.
The set of computations of $\program$ starting in $s$ is denoted
by~$\computations{\program}{s}$.

\vspace{-1mm}
\paragraph{\ctlfo Syntax and Semantics}
The following definitions are standard, see
e.g.~\cite{Bohn98,KestenTCS95}.

Let $\mathcal{T}$ be a first order theory and $\modelsT$ denote its
satisfaction relation that we use to describe sets and relations
over program states.
Let $c$ range over assertions in $\mathcal{T}$ and $x$ range over variables.
A \ctlfo formula $\varphi$ is defined by the following grammar using
the notion of a path formula~$\phi$.
\begin{equation*}
  \begin{array}[t]{@{}r@{\;::=\;}l@{}}
  \varphi &
    \forall x: \varphi \mid
    \exists x: \varphi \mid
    c \mid
    \varphi \land \varphi  \mid
    \varphi \lor \varphi   \mid
    \pathA \, \phi \mid
    \pathE \, \phi \\ [\jot]
    \phi & \ltlNext \varphi \mid \ltlG \varphi \mid \varphi \ltlU \varphi
  \end{array}
\end{equation*}
As usual, we define  $\ltlF \varphi = (\ltrue \ltlU \varphi)$.
The satisfaction relation $\program\models \varphi$ holds if and only
if for each $s$ such that $\init(s)$ we
have~$\sat{\program}{s}{\varphi}$.
We define $\sat{\program}{s}{\varphi}$ as follows using an auxiliary
satisfaction relation $\sat{\program}{\pi}{\phi}$.
Note that $d$ ranges over values from the corresponding domain.
\begin{equation*}
  \begin{array}[t]{@{}l@{\text{ iff }}l@{}}
    \sat{\program}{s}{\forall x: \varphi} 
    &
    \text{for all $d$ holds } \sat{\program}{s}{\varphi\subst{x}{d}} \\[\jot]
    \sat{\program}{s}{\exists x: \varphi} 
    &
    \text{exists $d$ such that }
    \sat{\program}{s}{\varphi\subst{x}{d}} \\[\jot]
    \sat{\program}{s}{c} 
    &
    s \modelsT c \\[\jot]
    \sat{\program}{s}{\varphi_1 \land \varphi_2} 
    &
    \sat{\program}{s}{\varphi_1} \text{ and }
    \sat{\program}{s}{\varphi_2}\\[\jot]
    \sat{\program}{s}{\varphi_1 \lor \varphi_2} 
    &
    \sat{\program}{s}{\varphi_1} \text{ or }
    \sat{\program}{s}{\varphi_2}\\[\jot]
    \sat{\program}{s}{\pathA\, \phi} 
    &
    \text{for all $\pi \in \computations{\program}{s}$ holds } 
    \sat{\program}{\pi}{\phi} \\[\jot]
    \sat{\program}{s}{\pathE\, \phi}
    &
    \text{exists $\pi \in \computations{\program}{s}$ such that }
    \sat{\program}{\pi}{\phi} \\[\jot]
    \sat{\program}{\pi}{\ltlNext \varphi}
    &
    \pi = s_1, s_2, \ldots \text{ and }
    \sat{\program}{s_2}{\varphi} \\[\jot]
    \sat{\program}{\pi}{\ltlG \varphi}
    &
    \pi = s_1, s_2, \ldots \text{for all $i\geq 1$ holds }
    \sat{\program}{s_i}{\varphi} \\[\jot]
    \sat{\program}{\pi}{\varphi_1 \ltlU \varphi_2}
    &
    \begin{array}[t]{@{}l@{}}
      \pi = s_1, s_2, \ldots 
      \text{ and exists $j\geq 1$ such that }\\[\jot]
      \sat{\program}{s_j}{\varphi_2} \text{ and }
      \sat{\program}{s_i}{\varphi_1} \text{ for } 1 \leq i \leq j
    \end{array}
  \end{array}
\end{equation*}

\vspace{-3mm}
\paragraph{Quantified Horn Constraints}

Our method uses the \ehsfTool~\cite{ehsf} solver for forall-exists
Horn constraints and well-foundedness. 
We omit the syntax and semantics of constraints solved by \ehsfTool,
see \cite{ehsf} for details. 
Instead, we consider an example:
\begin{equation*}
  x \geq 0 \limp \exists y: x \geq y \land \mathit{rank}(x, y), \qquad
  \wfPred(\rank).
\end{equation*}
These constraints are an assertion over the interpretation
of the ``query symbol'' $\mathit{rank}$ (the predicate $\wfPred$ is not a
query symbol, but requires well-foundedness).
A solution maps query symbols into constraints.
The example has a solution that maps $\rank(x, y)$
to the constraint $(x \geq 0 \land y \leq x-1)$.

\ehsfTool resolves clauses like the above using  a CEGAR 
scheme to discover witnesses for existentially quantified variables. 
The refinement loop collects a global constraint that declaratively
determines which witnesses can be chosen. The chosen witnesses are
used to replace existential quantification, and then the resulting
universally quantified clauses are passed to a solver over decidable
theories, e.g., \textsc{HSF}~\cite{GrebenshchikovPLDI12} or
$\mu$\textsc{Z}~\cite{muz}. Such a solver either finds a solution, i.e., a
model for uninterpreted relations constrained by the clauses, or
returns a counterexample, which is a resolution tree (or DAG)
representing a contradiction. \ehsfTool turns the counterexample into
an additional constraint on the set of witness candidates, and
continues with the next iteration of the refinement loop. 

For the existential clause above, \ehsfTool introduces a
witness/Skolem relation $\skolemRelSymbol$ over variables $x$ and
$y$, i.e., $x\geq0 \land \skolemRelSymbol(x,y) \limp x \geq y \land
\mathit{rank}(x, y)$.
In addition, since for each $x$ such that $x\geq0$ holds we need a
value $y$, we require that such $x$ is in the domain of the Skolem
relation using an additional clause $x\geq0 \limp \exists y:
\skolemRelSymbol(x,y)$.
In the \ehsfTool approach, the search space of a Skolem relation 
$\skolemRelSymbol(x,y)$ is restricted by a template function
$\funTemplateOf{\skolemRelSymbol}{x,y}$.
To conclude this example, we note that one possible solution returned
by \ehsfTool is the Skolem relation $\skolemRelSymbol(x,y) = (y\leq x-1)$.


\section{Constraint Generation}
\label{sec-consgen}

In this section we present our algorithm \algConsGen for generating
constraints that characterize the satisfaction of a \ctlfo formula.
We also consider its complexity and correctness and present an example.

See Figure~\ref{fig-consgen-fo}.
\algConsGen performs a top-down, recursive descent through the syntax
tree of the given \ctlfo formula.
It introduces auxiliary predicates and generates a sequence of
implication and well-foundedness constraints over these predicates.
We use ``,'' to represent the concatenation operator on sequences of
constraints.
At each level of recursion, \algConsGen takes as input a \ctlfo
formula~$\varphi_0$, a tuple of variables $v_0$ that are considered to
be in scope and define a state, assertions $\init(v_0)$ and
$\nextRel(v_0, v_0')$ that describe a set of states and a transition
relation, respectively.
We assume that variables bound by first-order quantifiers in $\varphi_0$ do
not shadow other variables.
To generate constraints for checking if $\program = (\vars,
\init(\vars), \nextRel(\vars, \vars'))$ satisfies $\varphi$ we
execute~$\algConsGen(\varphi, \vars, \init(\vars), \nextRel(\vars,
\vars'))$.

\begin{figure}[t]
\begin{ezcode}
$\algConsGen(\varphi_0, \vars_0, \init(\vars_0), \nextRel(\vars_0, \vars_0')) =$ \[
match $\varphi_0$ with 
\ezcase{$\forall x : \varphi_1$}  \[
let $v_1  = (v_0, x)$ in
$\algConsGen(\varphi_1, \vars_1, \init(\vars_0), \nextRel(\vars_0, \vars_0')\land x' = x)$
\]
\ezcase{$\exists x : \varphi_1$}\[
let $v_1  = (v_0, x)$ in
 let $\aux = \text{fresh symbol of arity } |\vars_1|$ in
$\init(\vars_0) \limp \exists x: \aux(\vars_1), $ 
$\algConsGen(\varphi_1, \vars_1, \aux(\vars_1), \nextRel(\vars_0, \vars_0') \land x'=x)$
\]
\ezcase{$c$} \[
$\init(\vars_0) \limp c$
\]
\ezcase{$\pathE\ltlF \varphi_1$} \[
let $\inv, \aux = \text{fresh symbols of arity } |\vars_0|$ in
let $\rank = \text{fresh symbol of arity } |\vars_0|+|\vars_0|$ in  
$\init(\vars_0) \limp \inv(\vars_0)$,  
$\inv(\vars_0) \land \neg \aux(\vars_0) \limp \exists \vars_0': \nextRel(\vars_0, \vars_0') \land \inv(\vars_0')$
\hspace{3.7cm}$\land\ \rank(\vars_0, \vars_0'),$
$\wfPred(\rank), $
$\algConsGen(\varphi_1, \vars_0, \aux(\vars_0), \nextRel(\vars_0, \vars_0'))$
\]
\]
\end{ezcode}
  \vspace{-2mm}
  \caption{Constraint generation rules for FO quantification, 
    assertions, and existential/eventually quantification.}
  \label{fig-consgen-fo}
 \vspace{-4mm}
\end{figure}


\paragraph{Handling First-Order Quantification}
When $\varphi_0$ is obtained from some $\varphi_1$ by universally
quantifying over $x$, we directly descend into $\varphi_1$ after
adding $x$ to the scope. 
Hence, the recursive call to \algConsGen uses $v_1 = (v_0, x)$.
Since $\init(v_0)$ defines a set of states over $v_1$ in which $x$
ranges over arbitrary values, 
the application $\algConsGen(\varphi_1, v_1, \init(v_0), \dots)$
implicitly requires that $\varphi_1$ holds for arbitrary~$x$.
Since the value of $x$ is arbitrary but fixed within $\varphi_1$, we
require that the transition relation considered by the recursive calls
does not modify $x$ and thus extend $\nextRel$ to $\nextRel(v_0, v_0') \land x'=x$
in the last argument.

When $\varphi_0$ is obtained from some $\varphi_1$ by existentially
quantifying over $x$, we use an auxiliary predicate $\aux$ that implicitly
serves as witness for $x$.
A first constraint connects the set of states $\init(v_0)$ on which
$\varphi_0$ needs to hold with $\aux(v_1)$, which describes the states on
which $\varphi_1$ needs to hold. We require that for every state $s$ allowed
by $\init(v_0)$, a choice of $x$ exists such that the extension of $s$ with
$x$ is allowed by $\aux(v_1)$.
Then, the recursive call $\algConsGen(\varphi_1, v_1, \aux(v_1),
\dots)$ generates constraints that keep track of satisfaction of
$\varphi_1$ on arbitrary $x$ allowed by~$\aux(v_1)$.
Thus, $\aux(v_1)$ serves as a restriction of the choices allowed for $x$.

\paragraph{Handling Temporal Quantification}
We use a deductive proof system for \ctl~\cite{KestenTCS95} and
consider its proof rules from the perspective of constraint generation.

When $\varphi_0$ is a background theory assertion, i.e., does not
use path quantification, \algConsGen produces a constraint
that requires $\varphi_0$ to hold on every initial state.

When $\varphi_0$ requires that there is a path on which $\varphi_1$ eventually
holds, then \algConsGen uses an auxiliary predicate $\aux(\vars_0)$ to
describe those states in which $\varphi_1$ holds.
\algConsGen applies a combination of inductive reasoning together with
well-foundedness to show that $\aux(\vars_0)$ is eventually reached from the
initial states.
The induction hypothesis is represented as $\inv(\vars_0)$ and is required to
hold for every initial state and when $\aux(\vars_0)$ is not reached yet.
Then, the well-foundedness condition $\wfPred$, which requires that it is not
possible to come back into the induction hypothesis forever, ensures that
eventually we reach a ``base case'' in which $\aux(\vars_0)$ holds.
Hence, eventually $\varphi_1$ holds on some computations.
Note that the induction hypothesis $\inv(\vars_0)$, the well-founded relation
$\rank(\vars_0, \vars'_0)$, and the predicate $\aux(\vars_0)$
are left for the solver to be discovered.

\ifisReport
See Appendix~\ref{sec-full-ctl-fo-consgen} for the remaining rules that
describe the full set of \ctl temporal quantifiers.
\else
See \cite{Report} for the remaining rules that describe the full set of \ctl
temporal quantifiers.
\fi

\paragraph{Complexity and Correctness}

\algConsGen performs a single top-down descent through the syntax tree
of the given \ctlfo formula $\varphi$.
The run time and the size of the generated constraints
is linear in the size of $\varphi$.
Finding a solution for the generated constraints is undecidable in
general.
In practice however, the used solver often succeeds in finding a solution
(cf. Sect. \ref{sect:Eval}).
We formalize the correctness of \algConsGen in the following theorem.
\begin{theorem}
For a given program $\program$ with $\init(v)$ and $\nextRel(v, v')$ over $v$ and
a \ctlfo formula $\varphi$ the application $\algConsGen(\varphi, v,
\init(v), \nextRel(v, v'))$ computes a constraint that is satisfiable if
and only if $\program \models \varphi$.
\end{theorem}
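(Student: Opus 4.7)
The natural approach is structural induction on the \ctlfo formula $\varphi$. As stated, however, the theorem does not directly support induction, because the recursive calls in \algConsGen pass auxiliary assertions in place of $\init$ (e.g., $\aux(\vars_1)$ in the $\exists$ and $\pathE\ltlF$ cases) and enrich $\nextRel$ with equalities of the form $x'=x$ that freeze first-order bound variables. The plan is therefore to first strengthen the claim to: for every subformula $\varphi_0$, scope $\vars_0$, assertion $I(\vars_0)$, and transition relation $N(\vars_0, \vars_0')$, the constraint system produced by $\algConsGen(\varphi_0, \vars_0, I, N)$ is satisfiable if and only if $\sat{\program'}{s}{\varphi_0}$ holds for every state $s$ with $I(s)$, where $\program' = (\vars_0, I, N)$. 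The theorem is then the special case $I=\init$, $N=\nextRel$.

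The base case $\varphi_0 = c$ is immediate from the definition of $\modelsT$. For $\forall x:\varphi_1$, the key observation is that freezing $x$ via the added conjunct $x'=x$ is semantically transparent: computations of the extended system $(\vars_1, \init, N \land x'=x)$ are in bijection with computations of $\program'$ in which $x$ takes an arbitrary but fixed value throughout, so the \ctlfo universal semantics ``for all $d$'' is captured by a single application of the IH to $\varphi_1$ over the enlarged scope with unchanged $I$. For $\exists x:\varphi_1$, the Skolem predicate $\aux(\vars_1)$ plays the role of the witness; the totality clause $\init(\vars_0)\limp\exists x:\aux(\vars_1)$ ensures a witness exists from every $I$-state, and the IH applied with $\aux$ in place of $I$ characterizes $\sat{\program'}{s}{\exists x:\varphi_1}$ exactly.

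The main obstacle will be the temporal cases, illustrated by $\pathE\ltlF\varphi_1$, with analogous arguments needed for the $\pathA\ltlG$, $\pathA\ltlF$, $\pathE\ltlG$, $\ltlU$, and $\ltlNext$ rules in the appendix. Here the plan is to mirror the soundness and completeness arguments of Manna--Pnueli-style deductive proof rules for \ctl~\cite{KestenTCS95}. For soundness, given a model for $\inv$, $\aux$, and $\rank$, I would show that from any $s$ with $\init(s)$ the step clause yields a sequence of $\inv$-states that, by $\wfPred(\rank)$, cannot be extended indefinitely and therefore must terminate in a state satisfying $\aux$; the IH on $\varphi_1$ applied with $\aux$ as $I$ then delivers $\sat{\program'}{t}{\varphi_1}$ at that state, hence $\sat{\program'}{s}{\pathE\ltlF\varphi_1}$. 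For completeness, assuming $\sat{\program'}{s}{\pathE\ltlF\varphi_1}$ for every initial $s$, I would take $\aux$ to be the set of $\varphi_1$-states (the IH in the opposite direction then yields a solution for the recursive sub-constraints), $\inv$ the set of states from which some $\aux$-state is reachable, and $\rank(\vars_0,\vars_0')$ a well-founded relation obtained from a shortest witnessing path to $\aux$.

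A subtlety I would treat carefully is that the completeness step produces a semantic assignment in a possibly richer background, rather than a model definable in any fixed decidable theory; this is in line with the undecidability remark preceding the theorem statement. Once the generalized IH and the freezing trick for FO-quantified variables are in place, the remaining temporal cases follow the same soundness/well-foundedness and completeness/least-rank pattern as $\pathE\ltlF$, and the overall induction goes through mechanically.
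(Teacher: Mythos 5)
Your proposal is correct and follows essentially the same route as the paper's (sketched) proof: a structural induction that tracks the algorithm's recursion, generalizing the claim to arbitrary initial assertions and transition relations so that the recursive calls with $\aux$ in place of $\init$ and the frozen variables $x'=x$ fall under the induction hypothesis, with the $\exists$ case handled via the Skolem predicate exactly as in the paper and the temporal cases delegated to the soundness/completeness arguments of the deductive \ctl rules of~\cite{KestenTCS95}. You actually spell out more of the temporal cases than the paper does, which only works the $\exists x:\varphi_1$ step in detail.
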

\begin{proof}(sketch)
We omit the full proof here for space reasons.
We proceed by structural induction, as the constraint generation of the
algorithm $\algConsGen$.
Formally, we prove that the constraints generated by 
$\algConsGen(\varphi_0, v_0, \init(v_0), \nextRel(v_0, v_0'))$ have a solution 
if and only if the program $P = (v_0, \init(v_0), \nextRel(v_0, v_0'))$ satisfies 
$\varphi_0$.
The base case, i.e., $\varphi_0$ is an assertion $c$ from our background theory
$\mathcal{T}$, is trivial.

As example for an induction step, we consider
$\varphi_0 = \exists x : \varphi_1$.
To prove soundness, we assume that the generated constraints have a solution.
For the predicate $\aux$, this solution is a relation $S_{\aux}$ that satisfies
all constraints generated for $\aux$.
For each $s$ with $\init(s)$, we choose $\overline{x}_s$ such that 
$(s, \overline{x}_s) \in S_{\aux}$.
As we require $\init(v_0) \to \exists x : \aux(v_0, x)$, this element is
well-defined.
We now apply the induction hypothesis for 
$P' = ((v_0, x), \aux(v_0, x), \nextRel(v_0, v_0') \land x' = x)$
and $\varphi_1$. 
Then for all $s$ with $\init(s)$, we have $\sat{P'}{(s,
\overline{x}_s)}{\varphi_1}$, and as $P'$ is not changing $x$ by construction,
also $\sat{P'}{(s, \overline{x}_s)}{\varphi_1\subst{x}{\overline{x}_s}}$.
From this, $\sat{P}{s}{\varphi_0}$ directly follows.

For completeness, we proceed analogously. 
If $\sat{P}{\varphi_0}$ holds, then a suitable instantiation $\overline{x}_s$ 
of $x$ can be chosen for each $s$ with $\init(s)$, and thus we can construct a
solution for $\aux(v_0, x)$ from $\init(v_0)$.
\qed
\end{proof}

\vspace{-3mm}
\paragraph{Example}
\label{sec-example}

We illustrate \algConsGen (see Figure~\ref{fig-consgen-fo}) on a simple
example.
We consider a property that the value stored in a register $v$ can
grow without bound on some computation.
\begin{equation*}
  \forall x: v = x \limp \pathE \ltlF (v > x)
\end{equation*}
This property can be useful for providing evidence that a program is
actually vulnerable to a denial of service attack. 
Let $\init(\vars)$ and $\nextRel(\vars, \vars')$ describe a program over
a single variable~$v$. 

We apply \algConsGen on the property and the program and
obtain the following application trace (here, we treat $\limp$ as expected,
as its left-hand side is a background theory atom).
\begin{equation*}
  \small
  \begin{array}[t]{@{}l@{}l@{}l@{}l@{}l@{}}
    \algConsGen(& \forall x\!:\!v\!=\!x \limp \pathE \ltlF (v\!>\!x), &\vars, & \init(\vars), & \nextRel(\vars, \vars')) \\[\jot]
    \algConsGen(& v\!=\!x \limp \pathE \ltlF ( v\!>\!x), &(\vars, x), & \init(\vars), & \nextRel(\vars, \vars') \land x'\!=\!x) \\[\jot]
    \algConsGen(& v\!=\!x \limp \aux(v, x), &(\vars, x), & \init(\vars), & \nextRel(\vars, \vars') \land x'\!=\!x) \\[\jot]
    \algConsGen(& \pathE \ltlF (v\!>\!x), &(\vars, x),& \aux(\vars, x), &\nextRel(\vars, \vars') \land x'\!=\!x)
  \end{array}
\end{equation*}
This trace yields the following constraints.
\begin{equation*}
  \begin{array}[t]{@{}l@{}}
    \init(\vars) \limp ( v = x \limp \aux(\vars)) \\[\jot]
    \aux(\vars) \limp \inv(\vars, x) \\[\jot]
    \inv(\vars, x) \land \neg (v\!>\!x) \limp 
    \exists \vars', x'\!:\!
    \begin{array}[t]{@{}l@{}}
        \nextRel(\vars, x, \vars', x') \land x'=x\\[\jot]
        \mathrel{\land} \inv(\vars'\!, x') \land \rank(\vars, x, \vars'\!, x') \\ [\jot]
    \end{array}\\
    \wfPred(\rank)
  \end{array}
\end{equation*}
Note that there exists an interpretation of $\aux$, $\inv$, and
$\rank$ that satisfies these constraints if and only if the program
satisfies the property.



\section{Evaluation}
\label{sect:Eval}

In this section we present \ourTool, a \ctlfo verification engine.
\ourTool implements the procedure \algConsGen and 
applies \ehsfTool~\cite{ehsf} to solve resulting clauses.
\pagebreak

\newcommand{\Eric}[1]{#1}
\newcommand{\diff}[1]{#1} 

\begin{table}[!t]
\small
\vspace{-2mm}
\caption{Evaluation of \ourTool on examples
 from~\cite{CookPLDI13}.}
\label{table-systems}
\centering
\begin{tabular}{@{}|@{}l@{}|@{}c@{}|@{}c@{}|@{}c@{}|@{}c@{}|@{}c@{}|}
  \hline
  & Property $\phi$ & \cc{\!\!$\ctlfosat \phi$\!\!} & \cc{\!\!$\ctlfosat\neg\phi$\!\!} \\
  \cline{3-4} \cline{5-6}
  & & Res. & Time & Res. & Time \\
  \hline 
  P1  & $\exists x: AG(a=x \limp AF(r=1))$ & \ipYes  &  1.0& \ipNo   &  0.1 \\  
      & $AG(\exists x: a=x \limp AF(r=1))$ & \ipYes  &  0.9& \ipNo   &  0.1 \\  
  \hline 
  P2  & $\exists x: EF(a=x \land EG(r\neq5))$  & \ipYes  & 0.9 & \ipNo  & 0.2 \\
      & $EF(\exists x: a=x \land EG(r\neq5))$  & \ipYes  & 0.6 & \ipNo  & 0.2 \\
  \hline 
  P3  & $\exists x: AG(a=x \limp EF(r=1))$  & \ipYes  &  1.1& \ipNo  & 0.1 \\
      & $AG(\exists x: a=x \limp EF(r=1))$  & \ipYes  &  1.0& \ipNo  & 0.1 \\
  \hline 
  P4  & $\exists x: EF(a=x \land AG(r\neq1))$ & \ipYes  & 1.8 & \ipNo  & 0.4  \\
      & $EF(\exists x: a=x \land AG(r\neq1))$ & \ipYes  & 0.9 & \ipNo  & 0.4  \\
  \hline 
  \hline 
  P5  & $\exists x: AG(s=x \limp AF(u=x))$  & \ipYes & 7.0& \ipNo & 0.1 \\
      & $AG(\exists x: s=x \limp AF(u=x))$  & \ipYes & 7.2& \ipNo & 0.1 \\
  \hline 
  P6  & $\exists x: EF(s=x \land EG(u\neq x))$ & \ipYes & 1.8 & \ipNo & 2.2 \\
      & $EF(\exists x: s=x \land EG(u\neq x))$ & \ipYes & 1.1 & \ipNo & 2.1 \\
  \hline 
  P7  & $\exists x: AG(s=x \limp EF(u=x))$ & \ipYes & 3.1& \ipNo & 0.2 \\
      & $AG(\exists x: s=x \limp EF(u=x))$ & \ipYes & 6.5& \ipNo & 0.1 \\
  \hline 
  P8  & $\exists x: EF(s=x \land AG(u\neq x))$  & \ipYes & 14.3 & \ipNo & 1.8 \\
      & $EF(\exists x: s=x \land AG(u\neq x))$  & \ipYes & 13.9 & \ipNo & 1.8 \\
  \hline 
  \hline 
  P9  & $\exists x: AG(a=x\limp AF(r=1))$ & \ipYes   & 118.7 & \ipNo  &17.3 \\
      & $AG(\exists x: a=x\limp AF(r=1))$ & \ipYes   & 82.3  & \ipNo  & 1.4 \\
  \hline 
  P10 & $\forall x: EF(a=x\land EG(r\neq1))$ & \ipTO &- & \ipNo  & 3.5 \\
      & $EF(\forall x: a=x\land EG(r\neq1))$ & \ipTO &- & \ipNo  & 3.5 \\
  \hline 
  P11 & $\exists x: AG(a=x\limp EF(r=1))$ & \ipYes  & 126.8 & \ipNo  & 3.6 \\
      & $AG(\exists x: a=x\limp EF(r=1))$ & \ipYes  & 140.3 & \ipNo  & 0.2 \\
  \hline 
  P12 & $\forall x: EF(a=x\land AG(r\neq1))$ & \ipYes& 146.7 & \ipNo & 3.2 \\
      & $EF(\forall x: a=x\land AG(r\neq1))$ & \ipYes& 161.7 & \ipNo & 0.2  \\
  \hline 
  \hline 
  P13 & $\exists x: AF(io = x)\lor AF(ret=x)$        & \ipYes  & 576.8 & \ipNo    &  0.3   \\
  \hline 
  P14 & $\exists x: EG(io\neq x)\land EG(ret\neq x)$ & \ipYes  &  15.1 & \ipNo    & 48.1 \\
  \hline 
  P15 & $\exists x: EF(io=x)\land EF(ret=x)$         & \ipYes  & 166.4 & \ipNo    &  1.9 \\
  \hline 
  P16 & $\exists x: AG(io\neq x)\lor AG(ret\neq x)$  & \ipYes  &   3.4 & \ipTO    &  - \\
  \hline
\end{tabular}
\vspace{-5mm}
\end{table}


We run \ourTool on the examples \texttt{OS frag.1}, \ldots, \texttt{OS frag.4}
from industrial code from \cite[Figure~7]{CookPLDI13}.
Each example consists of a program and a \ctl property to be proven.
We have modified the given properties to lift the \ctl formula to \ctlfo.
As example, consider the property $AG(a=1\rightarrow AF(r=1))$.
One modified property to check could be $\exists x: AG(a=x\rightarrow
AF(r=1))$, and another one is $AG(\exists x: (a=x\rightarrow
AF(r=1)))$.  
By doing similar satisfiability-preserving transformations of the properties
for all the example programs, we get a set programs whose properties are
specified in \ctlfo as shown in Table~\ref{table-systems}.
For each pair of a program and \ctlfo property $\phi$, we generated two
verification tasks: proving $\phi$ and proving $\neg\phi$.
While the existence of a proof for a property $\phi$ implies that $\neg\phi$
is violated by the same program, we consider both properties to show the
correctness of our tool.

We report the results in Table~\ref{table-systems}. \ipYes (resp. \ipNo{})
marks cases where \ourTool was able to prove (resp. disprove) a \ctlfo
property. \ipTO marks the cases where \ourTool could not find a
solution or a counter-example in 600 seconds.
 
\ourTool is able to find proofs for all the correct programs except for
P10 and counter-examples for all incorrect programs except for P16. 
Currently, \ourTool models the control flow symbolically using a program
counter variable, which we believe is the most likely reason for the
solving procedure to time out. 
Efficient treatment of control flow along the lines of explicit
analysis as performed in the CPAchecker framework could lead to
significant improvements for dealing with programs with large
control-flow graphs~\cite{ExplicitFASE13}. 
An executable of \ourTool, together with a more verbose evaluation, can be found at
\url{https://www7.in.tum.de/~beyene/ctlfo/}. 

For cases where the property contains nested path quantifiers
and the outer temporal quantifier is $F$ or $U$, our implementation
may generate non-Horn clauses following the proof system
from~\cite{KestenTCS95}.
While a general algorithm for solving non-Horn clauses is beyond the
scope of this paper, we used a simple heuristic to seed solutions for
queries appearing under the negation operator.


\section{Conclusion}

This paper presented an automated method for proving program
properties written in the temporal logic \ctlfo, which combines
universal and existential quantification over time and data.
Our approach relies on a constraint generation algorithm that follows
the formula structure to produce constraints in the form of Horn
constraints with forall/exists quantifier alternation.
The obtained constraints can be solved using an off-the-shelf
constraint solver, thus resulting in an automatic verifier.

\bibliographystyle{abbrv}
\bibliography{biblio}


\newpage
\appendix

\section{Remaining rules}
\label{sec-full-ctl-fo-consgen}
In this section we present the remaining rules of \algConsGen, which
deal with the complete set of temporal quantifiers. 
See Figure~\ref{fig-consgen-temporal}.\\

\begin{minipage}{\textwidth}
\begin{ezcode}
\ezcase{$\pathA\, \ltlNext \varphi_1$} \[
let $\aux = \text{fresh symbol of arity } |\vars_0|$ in
$\init(\vars_0) \limp \exists \vars_0': \nextRel(\vars_0, \vars_0'), $ 
$\init(\vars_0) \land \nextRel(\vars_0, \vars_0')\limp \aux(\vars_0'), $ 
$\algConsGen(\varphi_1, \vars_0, \aux(\vars_0), \nextRel(\vars_0, \vars_0'))$
\]
\ezcase{$\pathE\, \ltlNext \varphi_1$} \[
let $\aux = \text{fresh symbol of arity } |\vars_0|$ in
$\init(\vars_0) \limp \exists \vars_0': \nextRel(\vars_0, \vars_0')\land \aux(\vars_0'), $ 
$\algConsGen(\varphi_1, \vars_0, \aux(\vars_0), \nextRel(\vars_0, \vars_0'))$
\]
\ezcase{$\pathA\, \ltlG \varphi_1$} \[
let $\inv = \text{fresh symbol of arity } |\vars_0|$ in
$\init(\vars_0) \limp \inv(\vars_0),$ 
$\inv(\vars_0) \land \nextRel(\vars_0, \vars_0') \limp \inv(\vars_0'), $  
$\algConsGen(\varphi_1, \vars_0, \inv(\vars_0), \nextRel(\vars_0, \vars_0'))$
\]
\ezcase{$\pathE\, \ltlG \varphi_1$} \[
let $\inv = \text{fresh symbol of arity } |\vars_0|$ in
$\init(\vars_0) \limp \inv(\vars_0),$
$\inv(\vars_0) \land \nextRel(\vars_0, \vars_0') \limp \exists \vars_0': \nextRel(\vars_0, \vars_0') \land \inv(\vars_0'), $  
$\algConsGen(\varphi_1, \vars_0, \inv(\vars_0), \nextRel(\vars_0, \vars_0'))$
\]
\ezcase{$\pathA\, (\varphi_1\ltlU \varphi_2)$} \[
let $\inv, \aux_1, \aux_2 = \text{fresh symbols of arity } |\vars_0|$ in
let $\rank = \text{fresh symbol of arity } |\vars_0|+|\vars_0|$ in  
$\init(\vars_0) \limp \inv(\vars_0), $
$\inv(\vars_0) \land \neg \aux_2(\vars_0) \limp  %
\aux_1(\vars_0) \land \exists \vars_0': \nextRel(\vars_0, \vars_0'), $
$\inv(\vars_0) \land \neg \aux_2(\vars_0) \land \nextRel(\vars_0, \vars_0') \limp  %
\inv(\vars_0') \land \rank(\vars_0, \vars_0'),$  
$\wfPred(\rank), $ 
$\algConsGen(\varphi_1, \vars_0, \aux_1(\vars_0), \nextRel(\vars_0, \vars_0')), $ %
$\algConsGen(\varphi_2, \vars_0, \aux_2(\vars_0), \nextRel(\vars_0, \vars_0'))$
\]
\ezcase{$\pathE\, (\varphi_1\ltlU \varphi_2)$} \[
let $\inv, \aux_1, \aux_2 = \text{fresh symbols of arity } |\vars_0|$ in
let $\rank = \text{fresh symbol of arity } |\vars_0|+|\vars_0|$ in  
$\init(\vars_0) \limp \inv(\vars_0),  $
$\inv(\vars_0) \land \neg \aux_2(\vars_0) \limp \aux_1(\vars_0) \land \exists \vars_0': \nextRel(\vars_0, \vars_0') \land %
\inv(\vars_0') \land \rank(\vars_0, \vars_0'),$
$\wfPred(\rank), $
$\algConsGen(\varphi_1, \vars_0, \aux_1(\vars_0), \nextRel(\vars_0, \vars_0')),$ %
$\algConsGen(\varphi_2, \vars_0, \aux_2(\vars_0), \nextRel(\vars_0, \vars_0'))$
\]
\ezcase{$(\pathA/\pathE)\, \ltlF \varphi_1$} %
$\algConsGen(\vars_0, \init(\vars_0), \nextRel(\vars_0, \vars_0'), (\pathA/\pathE)\, (\ltrue \ltlU \varphi_1))$ 
\ezcase{$\varphi_1 \land/\lor \varphi_2$} \[
let $\aux_1, \aux_2 = \text{fresh symbols of arity } |\vars_0|$ in
$\init(\vars_0) \limp \aux_1(\vars_0) \land/\lor \aux_2(\vars_0), $
$\algConsGen(\varphi_1, \vars_0, \aux_1(\vars_0), \nextRel(\vars_0, \vars_0')), $ %
$\algConsGen(\varphi_2, \vars_0, \aux_2(\vars_0), \nextRel(\vars_0, \vars_0'))$ 
\]
\end{ezcode}
\captionof{figure}{Remaining rules of constraint generation algorithm \algConsGen.
  }
  \label{fig-consgen-temporal}
\end{minipage}
\ \\
\ \\
\ \\
\ \\
\ \\
\ \\



\end{document}